\newtheorem{theorem}{Theorem}
\newtheorem{proof}{Proof}
\newtheorem{corollary}{Corollary}
\begin{document}

%\preprint{APS/123-QED}

\title{Force-free Electrodynamics and Foliations in an arbitrary Spacetime}
%\thanks{A footnote to the article title}%

\author{Govind Menon}
\affiliation{Department of Chemistry and Physics\\ Troy University, Troy, Al 36082}
%\email{gmenon@troy.edu}

\date{\today}% It is always \today, today,
             %  but any date may be explicitly specified

\begin{abstract}
In this paper we formulate the relationship between force-free electrodynamics and foliations. The background metric, is considered predetermined and electrically neutral, but otherwise arbitrary. As it turns out, solutions to force-free electrodynamics is intimately connected to the existence of foliations of a spacetime with prescribed properties. We also prove a local existence and uniqueness theorem and provide a recipe for constructing the unique solution/class of solutions when certain conditions are met. We clarify the theorem with examples. We are also able to also prove a singularity theorem for when non-null solutions approach the null limit. Here too, we construct an explicit example to illustrate the singularity theorem. For variety of discussion, we conclude with a solution to Maxwell's equations in FRW cosmology.
\end{abstract}

\pacs{Valid PACS appear here}% PACS, the Physics and Astronomy
                             % Classification Scheme.
%\keywords{Suggested keywords}%Use showkeys class option if keyword
                              %display desired
\maketitle

%\tableofcontents

\section{INTRODUCTION}

Force-free electrodynamics (FFE) has become the central framework for describing the magnetospheres of active black holes. Although the governing equations of force-free electrodynamics were developed as early 1977 (\cite{BZ77}, \cite{Carter79}), it did not receive much attention till the late 90's as a subject of systematic study (\cite{Uchida1}, \cite{Uchida2}). More recently, the force-free magnetosphere in a Kerr background has played a prominent role in the study of black hole astrophysics. Properties of the force-free magnetosphere and its abilities to extract energy and angular momentum was a general feature of a numerical study of the subject ( \cite{TT14}, \cite{Rio18} and \cite{Tchekhovskoy_2012}). 
From a theoretical point of view, analytical solutions to a Kerr force-free magnetosphere slowly emerged as well (\cite{MD07}, \cite{BGJ13} and \cite{Menon15}). A recent paper by Gralla and Jacobson (\cite{GT14}) captures the current status of the theory of FFE.  

In this paper we will focus on the initial value problem of FFE such as it is. Recent work using Euler potentials has suggested that FFE is ill posed (\cite{RR17}). Others have concluded that FFE is deterministic in the magnetically dominated case (\cite{GT14}, \cite{PH13}, and \cite{Pal11}). We are able to write down a deterministic evolution equation in both the electrically and magnetically dominated case. The resulting solution is unique. We are also able to show that in the null case there will be a unique class of solutions. As we will show below, FFE is deterministic (in the generalised sense of class of solutions) regardless of sign and value of $F^2$. Here $F$ is the electromagnetic field tensor. Additionally, we will prove a local existence and uniqueness theorem for FFE. In \cite{CGL16}, the authors argue that the existence of a stationary, axis-symmetric, magnetically dominated, force-free electromagnetic field in a Kerr background is entirely dependent on the existence of a foliation of Kerr spacetime with certain well prescribed properties. In the following sections we will prove the same as a general result. In other words, we will elevate 2-dimensional foliations with certain well prescribed properties as a fundamental object for FFE in an arbitrary spacetime. These prescriptions are very natural and expected, and cannot be relaxed.  Geometry (or gravity if you prefer) alone determines the existence of a force-free electromagnetic field. We do not fix the background, nor will we have any restrictions of stationarity and axis-symmetry. The foliations we refer to are integral submanifolds of an involutive distribution that is the kernel of the 2-form $F$. It will also become clear that the notion of an initial value problem for FFE has to be replaced by the study of FFE admissible spacetimes and foliations.

Curiously, while our formalism is able to treat the general evolution equations in a cohesive manner, it is not the case that $F^2$ can change its sign smoothly. I.e., we first describe a null field, and then separately an electrically or magnetically dominated field. A smooth transition is not in general permitted. To be clear, it is not the failure of our formalism, but indeed additional restrictions are required to allow this transition if indeed such a transition is possible. In particular, we will prove that in the case of foliations of spacetime generated by commuting Killing vector fields, non-null solutions do not have a well defined null limit. The solution necessarily becomes unbounded in this case.   

We begin with a recapitulation of the basic equations and properties of FFE, following which we will recast the equations in a coordinate system that is adapted to the foliations. It is in this adapted chart that we will prove our existence and uniqueness theorems. To illustrate the computational ability of our formalism, we will re-derive the previously obtained solutions in \cite{MD07}, \cite{BGJ13} and \cite{Menon15} using the new formalism. Spacetimes containing commuting Killing vector fields are then treated as a special case, because, as we will show they necessarily give rise to a special FFE solution under very mild restrictions. As an example we will conclude with a pair of vacuum solutions in a Kerr background. All solutions presented in this paper are exact.

In this article, as usual, comma denotes partial derivatives, i.e.,  $$M_{,r} \equiv \frac{\partial M}{\partial x^r}\;.$$
Also, in $\sqrt{-g}$, $g$ denotes the determinant of the metric, i.e., $\det g$. Otherwise $g$ is simply the metric, and $\epsilon_{\mu\nu \alpha \beta }$ is the Levi-Civita tensor such that 
$\epsilon_{1234} = \sqrt{- g}\;.$
Notice the unusual 1-4 labelling of indices. This will be made clear along the way.

%%%%%%%%%%%%%%%%%%%%%%%%%%%%%%%%%%%%%%%%%%%%%%%%%%%%%%%%
\section{Basic Properties of the Force-Free Electromagnetic field}
A spacetime, for our purposes, is a 4-dimensional smooth manifold ${\cal M}$ endowed with a metric $g$ of Lorentz signature, specifically $(-1, 1,1,1)$. In this work, we are concerned with the evolution of the electromagnetic field generated by a (possibly) non-trivial current density. For this reason, since we would like to for account for all the electromagnetic interactions in our formalism, we will assume that the background metric satisfies the Einstein equation for at most electrically  neutral sources. We do not place any further restrictions on $g$ for our main theorem. We will consider the background as fixed and electrically neutral (i.e., background $F=0$). Maxwell's equations in tensor form is usually written as
\begin{equation}
\nabla_\nu F^{\mu\nu}= J^\mu\;,
\label{Max_1}
\end{equation}
and
\begin{equation}
\nabla_{[\mu} F_{\nu\lambda]}=0\;.
\end{equation}
Here, $F$ is the Maxwell field tensor and $J$ is the current density. Also, $[\mu\nu\lambda]$ denotes anti-symmetrization of the included indices in the usual manner.

Following Gralla and Jacobson (\cite{GT14}), we will mainly use the formalism of exterior calculus to describe the electromagnetic interaction. Please see appendix \ref{ExtCalc} for details. In this case, $F$ is to be viewed as a closed 2-form, i.e.,
\begin{equation}
 d  F = 0\;,
 \label{Fclosed}
\end{equation}
 which satisfies
\begin{equation}
* \;d * F = J\;.
\label{inhomMaxform}
\end{equation}
Here $*$ is the Hodge-Star operator and $d$ is the exterior derivatives on forms. From the above equation and eq.(\ref{Max_1}), it is clear that we use $J$ to describe both the current ``vector" density or the associated 1-form. Recall that a vector field in geometry is also referred to as a contravariant vector field by physicists, and a 1-form, or a dual vector field is the usual covariant vector field. The distinction is made either by context or explicitly by denoting its index. For example, $J^\mu$ is a contravariant vector filed, while $J_\mu \equiv g_{\mu\nu} J^\nu$ is the associated covariant vector field. The same distinction is to be understood for all tensorial objects.
\vskip0.2in
An electromagnetic field is degenerate if there is a vector field $w$ such that the interior product of $w$ with $F$ vanishes. I.e.,
$$
i_w F \equiv F(w, \cdot)=0\;.
$$
Force-free electrodynamics is a special case where
the degenerate field satisfies
$$i_J\; F =0\;,$$
where $J$ is the current density given by eq.(\ref{Max_1}). In components, this can be written as
\begin{equation}
F_{\mu\nu}\nabla_\lambda F^{\nu\lambda} =0\;.
\label{FF v1}
\end{equation}
Our central focus will be to establish to a meaningful initial data set/surface for the above equation, and also to prove a local existence and uniqueness theorem.

\vskip0.2in
We begin our formulation of the problem by recalling a few properties of degenerate fields (/force-free electrodynamics). The interested reader is referred to  \cite{Carter79}, \cite{Uchida1}, and \cite{Uchida2} for details. All the essential properties of FFE has been recast in an efficient way using modern notation in \cite{GT14}, and should serve as a  reference guide for our current work.
A force-free electromagnetic field is a simple 2-form. I.e., there exists 1-forms $\alpha$ and $\beta$ such that
$$F= \alpha \wedge \beta\;.$$
Consequently, the kernel of $F$ is a 2 dimensional subspace of the tangent bundle consisting of all vector fields $v$ such that $\alpha(v)=0=\beta(v)$. We will then have that $i_v F =0$.

\vskip0.2in
Depending on the causal character of the kernel of $F$, denoted as $\ker F$, we can locally classify $F$ into three categories. If for any point $p$ in our spacetime, the kernel of $F$ at $p$ is  spacelike, Lorentz, or if the metric restricted  to the kernel is degenerate, we say that $F$ is electrically dominated, magnetically dominated or null at $p$. This is equivalent to the requirement that the scalar
$$F^2 (p)\equiv F_{\mu\nu} F^{\mu\nu} (p)$$
is less than zero, greater than zero or zero respectively. Recall, that the metric restricted to $\ker F |_p$ is degenerate  if there exists $l \in \ker F |_p$ such that $g(l, v)=0$ for all $v \in \ker F |_p$. In particular $l$ is a null vector.

\vskip0.2in
From Cartan's magic formula, we get that
$$v \in \ker F\;\;{\rm implies \;that} \;\;{\cal L}_v F =0\;.$$
In the remainder of the section we will rely on terminology and key results on foliation from differential geometry. Please see appendix \ref{DGAppen} for a refresher on the definitions/results used below.
Now suppose $v, w \in \ker F$. Since
$$i_{[v,w]} F= [{\cal L}_v, i_w] F =0\;, $$
we have that $\ker F$ is an involutive distribution, and therefore by Frobenius' theorem, spacetime can be foliated by 2-dimensional integral submanifolds of the distribution spanned by $\ker F$.

\vskip0.2in
In particular, given any point $p \in {\cal M}$, there exists a coordinate chart $\big(U_p, \phi_p= (x^1, \dots, x^4)\big)$ centered about $p$, i.e.,
$$ \phi_p (p) = \big(x^1(p), \dots, x^4(p)\big)=(0, \dots, 0)\;,$$
that is adapted to the distribution. Without loss of generality, this means we can arrange for
\begin{equation}
 {\rm span}\;\left\{ \frac{\partial}{\partial x^1}, \frac{\partial}{\partial x^2}\right \} = \ker F\; \Big|_{U_p}\;.
 \label{adapchart}
 \end{equation}
We will take eq.(\ref{adapchart}) as part of the requirements for our {\it adapted chart} at $p$ for the foliation determined by $\ker F$.
There is no preference here for a timelike coordinate, and so we label the adapted coordinates with indices ranging from $1-4$, rather than the usual $0-3$. 
In this chart, the electromagnetic field tensor can be written as
$$F= u \;dx^3 \wedge dx^4$$ for some component function $u(x^1, \dots, x^4)$.  Eq.(\ref{Fclosed}) now limits $u$ to only be a function of $x^3$ and $x^4$. We then obtain the needed final form of $F$:
\begin{equation}
F= u(x^3, x^4) \;dx^3 \wedge dx^4\;.
\label{finalformF}
\end{equation}

\section{Equations of Force-Free Electrodynamics in an Adapted Coordinate System}
All our results will only be valid locally, and so we restrict all discussions to the adapted coordinate system described above. We will also have the occasion to require that the domain of the chart $U_p$ is {\it starlike} about $p$. This means that if $q \in U_p$, then the line segment from $p$ to $q$ lies in $U_p$.

We seek an expression for $u$ in eq.(\ref {finalformF}) satisfying eq.(\ref{inhomMaxform}) such that
\begin{equation}
(J^\sharp)^a = 0 \;\;{\rm for}\;\;a=3,4\;.
\label{Jcond}
\end{equation}
Here, $\sharp$ is the raising operator defined by
$$(J^\sharp)^\mu = g^{\mu\nu} J_\nu\;.$$
This will ensure that the resulting current is force-free. Applying the Hodge-Star and the exterior derivative in the appropriate order to $F$ in eq.(\ref{finalformF}) gives that

$(J^\sharp)^a=\big((*\;d * F)^\sharp\big)^a$
\begin{equation}
= \frac{1}{2} \;\epsilon^{r \alpha \beta a} \;\partial_r \left(u\;\epsilon_{\mu\nu \alpha \beta }\; g^{\mu 3} \;g^{\nu 4}\right)\;. 
\label{jexpress}
\end{equation}
The requirements of eq.(\ref{Jcond}) now gives our equations in component form:
\begin{equation}
\epsilon^{r \alpha \beta a} \;\partial_r \left(u\;\epsilon_{\mu\nu \alpha \beta }\; g^{\mu 3} \;g^{\nu 4}\right)=0\;\;{\rm for} \;\;a=3,4 \;.
\label{FF v2}
\end{equation}
A solution for $u$ in the above equation will result in a force-free electromagnetic field or a trivial case of a vacuum solution where $J=0$. To simplify the above expression when $a=3$, define a quantity $M^r$ by 
$$M^r= g^{r 3} \;g^{3 4}- g^{3 3} \;g^{r 4}\;.$$
Naturally, despite its notational appearance, $M^r$ is not a vector field. Then with the help of eq.(\ref{levicontract}) it is easily seen that
$$
\epsilon^{r \alpha \beta 3}  \epsilon_{\mu\nu \alpha \beta }\; g^{\mu 3} \;g^{\nu 4}= -2 M^r\;.
$$
Also, by taking the derivative of $\sqrt{-g}^{\;-1}$ we see that
$$
\partial_r\; \epsilon^{r \alpha \beta a}= - (\partial_r \ln \sqrt{-g}) \;\epsilon^{r \alpha \beta a}\;.
$$
Finally, observing that
\begin{widetext}
$$\epsilon^{r \alpha \beta a} \;\partial_r \left(u\;\epsilon_{\mu\nu \alpha \beta }\; g^{\mu 3} \;g^{\nu 4}\right)=\epsilon^{r \alpha \beta a} \;\epsilon_{\mu\nu \alpha \beta }\; g^{\mu 3} \;g^{\nu 4}\; \;\partial_r u\;+\;u\;\partial_r \left(\epsilon^{r \alpha \beta a} \;\epsilon_{\mu\nu \alpha \beta }\; g^{\mu 3} \;g^{\nu 4}\right)-u\;\epsilon_{\mu\nu \alpha \beta }\; g^{\mu 3} \;g^{\nu 4}\;\partial_r \left(\epsilon^{r \alpha \beta a} \right)\;\;,$$
\end{widetext}
we get that when $a=3$, eq. (\ref{FF v2}) reduces to
$$M^4 \;\frac{\partial}{\partial{x^4}} \ln |u| = - \frac{1}{\sqrt{-g}}\; \frac{\partial}{\partial{x^r}} \left(\sqrt{- g}\; M^r\right)\;.$$
In exactly the same way, when $a=4$, eq. (\ref{FF v2}) reduces to
$$N^3 \;\frac{\partial}{\partial{x^3}} \ln |u| = - \frac{1}{\sqrt{-g}}\; \frac{\partial}{\partial{x^r}} \left(\sqrt{- g}\; N^r\right)\;,$$
where
$$N^r= g^{r 3} \;g^{4 4}- g^{3 4} \;g^{r 4}\;.$$
Although $M^r$ is not a 4-vector we will still write
$$\frac{1}{\sqrt{-g}}\; \frac{\partial}{\partial{x^r}} \left(\sqrt{- g}\; M^r\right)$$
as
$$\nabla_r M^r\;.$$
This strictly for notational convenience. Similar remarks apply to $N^r$ as well.
With this simplification, and setting
\begin{equation}
M\equiv M^4= -N^3\;,
\label{Mdef}
\end{equation}
we can now write the equations of force-free electrodynamics as
\begin{equation}
M\;\frac{\partial}{\partial{x^4}} \ln |u| =-\nabla_r\; M^r
\label{FF v3_a}
\end{equation}
and
\begin{equation}
M\;\frac{\partial}{\partial{x^3}} \ln |u| =\nabla_r\; N^r\;.
\label{FF v3_b}
\end{equation}
We have used the fact that $u$ is only a function of $x^3$ and $x^4$ in deriving the above equations. Eq.(\ref{FF v3_a}) and (\ref{FF v3_b}) will serve as our basic equations of FFE for the remainder of the paper.
%%%%%%%%%%%%%%%%%%%%%%%%%%%%%%%%%%%%%%%%%%%%%%%%%%%%%%%%%%%%%%%%%%%%%%%%%
\section{Initial Data, Local Existence and Uniqueness Theorem}
As mentioned earlier, in this section, we will prove two local existence and uniqueness theorem; one for the null case, and the other for when the field is  electrically or magnetically dominated. We will then supply examples for both cases. Both examples are previously obtained FFE solutions recast in the adapted chart formalism. The existence theorem in the null case is almost trivial.
%%%%%%%%%%%%%%%%%%%%%%%%%%%%%%%%%%%%%%%%%%%%%%%%%%%%%%%%%%%%%%%%%%%%%%%%%
\subsection{The Null Force-Free Field}
\label{nullcondsect}

When $M = 0$ there is a non-trivial solution to the equation
$$\left(
    \begin{array}{cc}
      g^{33} & g^{34} \\
      g^{43} & g^{44} \\
    \end{array}
  \right)\left(
           \begin{array}{c}
             \chi_3 \\
             \chi_4 \\
           \end{array}
         \right) =0\;.
$$
I.e., there exists a 1-form
$$\chi = \chi_3 \;dx^3 + \chi_4 \;dx^4$$
such that the vector field
$\chi^\sharp \in \ker F$, meaning
$$\chi^\sharp = \chi^1 \frac{\partial}{\partial x^1}+ \chi^2 \frac{\partial}{\partial x^2}\;.$$
On the other hand, we can always write $F$ in the form
\begin{equation}
 F = \bar u \;(\psi \wedge \chi)\;,
 \label{Finnullbasis}
\end{equation}
where $g(\psi,\chi)=0$ and $\bar u$ is a new component function. Then
$i_{\chi^\sharp}\; F = 0 $ implies that
$$\chi(\chi^\sharp) =0 = \psi(\chi)\;.$$
In this case, $\chi$ is a null vector and  $F^2 = 2 \bar u^2 \psi^2 \chi^2 =0$. Consequently, $F$ is a null force-free field. Moreover, since
$$g(\chi^\sharp, W)= \chi(W) =0$$
for every $W \in \ker F$, we have that the metric when restricted to the kernel of $F$ is degenerate as previously discussed.

We know that null force-free solutions exists, for example see \cite{MD07}, (and \cite{BGJ13} for its generalizations). So, from physical grounds, it is only expected that there should be well defined evolution equations in this case. That is almost true. It turns out that if null solutions exists, they come as a class of solutions with an arbitrary function of 2 variables. 
%%%%%%%%%%%%%%%%%%%%%%%%%%%%%%%%%%%%%%%%%%%%%%%%%%%%%%%%%%%%%%%%%%%
\begin{theorem}
Let ${\cal F}$ be a $2$-dimensional foliation of a spacetime with metric $g$.  Let $\big(U_p, \phi_p= (x^1, \dots, x^4)\big)$ be an adapted chart about any arbitrary point $p$. Suppose $M=0$ (as defnied in eq.(\ref{Mdef})) in $U_p$. Then $F$ given by eq.(\ref{finalformF}) for any smooth function $u(x^3,x^4)$ is a unique class of force-free solution satisfying eq.(\ref{adapchart}) in $U_p$ if and only if 
$$\nabla_r M^r = 0 = \nabla_r N^r\;.$$
\label{mainexistuninull}
\end{theorem}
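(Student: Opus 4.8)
The plan is to exploit the fact that, once $M=0$ is imposed, the two force-free equations (\ref{FF v3_a}) and (\ref{FF v3_b}) collapse into conditions that no longer involve the unknown component function $u$ at all. First I would recall that the adapted-chart form (\ref{finalformF}), namely $F=u(x^3,x^4)\,dx^3\wedge dx^4$, is forced: the requirement (\ref{adapchart}) that $\partial/\partial x^1,\partial/\partial x^2$ span $\ker F$, together with $dF=0$, leaves no freedom beyond the choice of a single component function depending only on $x^3,x^4$. This is what identifies the entire candidate solution space with the smooth functions $u(x^3,x^4)$, and is the content of the phrase ``unique class'' in the statement: in $U_p$ there are no other closed degenerate $2$-forms adapted to the foliation, so exhibiting that all such $u$ work exhausts the solutions.

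Next I would rewrite the basic equations (\ref{FF v3_a}) and (\ref{FF v3_b}) in the cleared form $M\,\partial_4 u = -u\,\nabla_r M^r$ and $M\,\partial_3 u = u\,\nabla_r N^r$, which removes the logarithm and therefore stays valid at points where $u$ vanishes. Substituting $M=M^4=-N^3=0$ from eq.(\ref{Mdef}) annihilates both left-hand sides identically, independently of $u$, so the force-free requirement reduces throughout $U_p$ to the pair $u\,\nabla_r M^r=0$ and $u\,\nabla_r N^r=0$.

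From here both implications are immediate. For the ``if'' direction, if $\nabla_r M^r=0=\nabla_r N^r$ on $U_p$, the two reduced equations hold for every smooth $u(x^3,x^4)$, so every such $F$ is force-free; since these exhaust the adapted closed degenerate $2$-forms, they constitute the full, unique class. For the ``only if'' direction I would note that $M=0$ already places us in the null case (the computation preceding the theorem shows the induced metric on $\ker F$ is degenerate), and then evaluate the reduced equations on the constant choice $u\equiv 1$, a legitimate smooth function: this yields $\nabla_r M^r=0$ and $\nabla_r N^r=0$ directly. Equivalently, demanding that the field be force-free for even one nowhere-vanishing $u$ forces the same two conditions.

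I do not expect a genuine analytic obstacle here. Unlike the non-null case, where $M\neq 0$ turns (\ref{FF v3_a}) and (\ref{FF v3_b}) into transport equations constraining $u$, the null case is algebraically trivial precisely because $M=0$ strips away all $u$-dependence. The only points needing care are bookkeeping: justifying the rigidity of the form (\ref{finalformF}) so that the solution class is genuinely exhaustive, and clearing the $\ln|u|$ so the argument does not covertly assume $u\neq 0$. Both are dispatched in the first two paragraphs, which is why the author calls the existence result in this regime ``almost trivial.''
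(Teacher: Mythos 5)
Your proof is correct and takes essentially the same route as the paper's: the paper's entire proof is that the statement is a ``trivial consequence'' of eqs.~(\ref{FF v3_a}) and (\ref{FF v3_b}), since $M=0$ annihilates both left-hand sides and leaves conditions on $\nabla_r M^r$ and $\nabla_r N^r$ that are independent of $u$ --- exactly your reduction. Your additional care (clearing $\ln|u|$ so that vanishing $u$ is not an issue, and testing $u\equiv 1$ for the ``only if'' direction) just makes explicit what the paper leaves implicit.
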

\begin{proof}
This is a trivial consequence of eqs.(\ref{FF v3_a}) and (\ref{FF v3_b}).
$\blacksquare$
\end{proof}
The proof is almost too easy that it begs for an example.  To illustrate the previous theorem, we will now describe the foliations generated by a null electromagnetic field denoted by $F_{Null}$. This solution was first obtained as an exact solution to the Blandford-Znajek equations. For details regarding its original derivation see \cite{MD07}. The current density vector in this case is proportional to the in-falling principal null geodesic of the Kerr geometry, which in the Boyer-Lindquist coordinates takes the form:
$$n=\frac{(r^2+a^2)}{\Delta} \;\partial_t -\partial_r + \frac{a}{\Delta}\;\partial_\varphi\;.$$
We will explicitly construct an adapted coordinate system and show that equations eq.(\ref{FF v3_a}), and eq.(\ref{FF v3_b}) are not violated.  The kernel of $F_{Null}$ is given by (\cite{MD07})
$$\ker F_{Null} = {\rm span}\;\{\Delta\;n, a \sin^2(\theta)\; \partial_t + \; \partial_\varphi\}\;.$$
As per the theorem, the above kernel fixes the entire class of solution.
For computational ease we have picked $\Delta\;n$ as a basis vector for the foliation. Set
$$X_1 = \Delta n$$
and
$$X_2 = a \sin^2(\theta) \;\partial_t + \; \partial_\varphi\;.$$ 
Fix $Q(t,r)$ by the expression
$$-2\left(t +r+\frac{r_+ ^2+ a^2}{r_+-r_-}\ln|r-r_+|-\frac{r_- ^2+ a^2}{r_+-r_-}\ln|r-r_-|\right)\;.$$
The defining properties of $Q$ are 
\begin{equation}
    Q_{,t} = -2 \;{\rm and}\;Q_{,r} = -2\;\frac{(r^2+a^2)}{\Delta}\;.
    \label{Qprop}
\end{equation}
Further, define vector fields
$$X_3 = Q(t,r)\; \partial_t - \tan(\theta)\; \partial_\theta$$
and
$$X_4=\partial_\varphi\;.$$
It is easily verified that $[X_i, X_j]=0$ for $i,j=1,\dots,4\;.$ Therefore, there exists an adapted coordinate system $(x^1, \dots, x^4)$ such that
$$\frac{\partial}{\partial x^i}=X_i\;.$$
Since there is no real need, we do not calculate the transformation functions for $\{x^i\}$ in terms of $(t,r,\theta,\varphi)$ explicitly. It is however important to note that, by construction, eq.(\ref{adapchart}) is automatically satisfied.
The coordinate $1$-forms transforms as

\begin{widetext}
\begin{equation}
  \left(
    \begin{array}{c}
      dx^1 \\
      dx^2 \\
      dx^3\\
      dx^4
    \end{array}
  \right)=\left(
           \begin{array}{cccc}
             0 &\frac{-1}{\Delta}&0&0 \\
             \frac{1}{a \sin^2(\theta)}&\frac{r^2+a^2}{a \Delta \sin^2(\theta)}&\frac{Q}{a\sin^2(\theta) \tan(\theta)}&0 \\
             0&0&\frac{-1}{ \tan(\theta)}&0\\
             \frac{-1}{a \sin^2(\theta)}&\frac{-\rho^2}{a \Delta \sin^2(\theta)}&\frac{-Q}{a \sin^2(\theta)\tan(\theta)}&1
           \end{array}
         \right) \left(\begin{array}{c}
      dt \\
      dr \\
      d\theta\\
      d\varphi
    \end{array} \right)\;.
    \label{nulladapforms}
\end{equation}
    \end{widetext}
    To compute $\{M^r\}$ and $\{N^r\}$ in the adapted frame we need to first calculate the components of the inverse metric. This is easily done by noting that 
    $$g^{ij} = g(dx^i, dx^j)\;.$$
    So we apply the above expressions in eq.(\ref{nulladapforms}) to eq.(\ref{invgkerr}), and for example observe that
    $$g^{11}= g(dx^1, dx^1)=\frac{1}{\rho^2 \Delta}\;.$$
    Proceeding in a similar manner and using the definitions of $\{M^r\}$ and $\{N^r\}$ we get that the only non-vanishing components of $M^r$ and $N^r$ are
    $$M^1=-\frac{\cos^2(\theta)}{\sin^4(\theta)}\;\frac{1}{a \rho^2 \Delta}$$
    and
       $$N^1=\frac{ \cos^2(\theta)}{\sin^6(\theta)}\;\frac{Q}{a^2 \rho^2 \Delta}\;.$$
   In particular, as expected, here $M=0$, and our previous theorem applies. Once the components of the metric, or its inverse, are found, its determinant is easily computed as well, and we get that 
    $$\sqrt{-g}=a \rho^2 \sin^3 (\theta) \Delta \tan(\theta)\;.$$
  Then $M^1\sqrt{-g}=-\cos(\theta)$, and clearly
       $$\frac{\partial}{\partial x^1} (M^1\sqrt{-g})= X_1 (M^1\sqrt{-g})=0\;.$$
       Therefore $M^r$ is divergence free (recall that the term divergence is used only due to the similarity in expression).
      Since
       $$N^1\sqrt{-g}=\frac{Q \cos(\theta)}{a \sin^2(\theta)}\;,$$
       we get that
       $$\sqrt{-g}\nabla_r N^r= X_1 (N^1\sqrt{-g})=$$
       $$\frac{\cos(\theta)}{a \sin^2(\theta)}\big[(r^2+a^2) \;Q_{,t}-\Delta \;Q_{,r}\big]=0$$
       as required. The above equality follows from eq.(\ref{Qprop}).
       Since $\{M^r\}$ and $\{N^r\}$ are divergence free, we have met all the requirements of the theorem. Therefore, eq.(\ref{finalformF}) and (\ref{nulladapforms}) imply that
       $$F_{Null}= u(\theta,x^4)\; d\theta \wedge\left[dt+\frac{\rho^2}{\Delta} dr-a\sin^2(\theta) d\varphi\right]$$
        satisfies the force-free equations for any smooth, but otherwise arbitrary $u(\theta,x^4)$. Here we have rewritten  as a function of $\theta$ and $x^4$. This is an acceptable trade considering the transformation given by eq. (\ref{nulladapforms}). We have also made the substitution 
       $$\frac{u}{a \tan(\theta) \sin^2(\theta)} \rightarrow u\;.$$
       Since we have not bothered to write out the explicit coordinate transformation for the adapted chart, the above solution can seem abstract and not very recognizable. To alleviate this confusion, note that $u(\theta,x^4)$ is not to be a function of $x^1$ and $x^2$. 
       In particular, this means that $X_1 (u) =0=X_2 (u)$.
     When written in the infalling Kerr-Schild coordinates 
        $$F_{Null}= -u(\theta, x^4)\; d\theta \wedge \left[d\bar t -a\sin^2(\theta) d\bar \varphi\right]$$
       $$ = - u(\theta, x^4)\; d\theta \wedge n^\flat\;,$$
where $n^\flat$ is the $1$-form defined by
  $$(n^\flat)_\mu = g_{\mu\nu} n^\nu\;,$$
  and $u(\theta, x^4) = u(\bar t, \bar r, \theta, \bar \varphi)$ is subject to the constraints
  $$X_1 (u) = -\partial_{\bar r} \;u =0\;,$$
  and
  $$X_2 (u) = a \sin^2 \theta \;u_{,\bar t} + u_{,\bar \varphi} =0\;.$$
  In fact, the solution $F_{Null}$ as presented above is the $\bar t$ and $\bar \varphi$ dependent generalization of the original derivation in \cite{MD07}.
  This generalization was first noted in \cite{BGJ13}. In our development here it is clear why we must necessarily have this generalization. In\cite{GT14}, the authors constructed a further generalization. It should also be clear why we do not get that generalization here: particular class of solutions depend entirely on the chosen null foliation!

%%%%%%%%%%%%%%%%%%%%%%%%%%%%%%%%%%%%%%%%%%%%
\subsection{$F^2 \neq 0$}
%%%%%%%%%%%%%%%%%%%%%%%%%%%%%%%%%%%%%%%%%%%%%%%
In this subsection we will require that $M\neq 0$ in $U_p$.
\begin{widetext}
\begin{theorem}
Let ${\cal F}$ be a $2$-dimensional foliation of a spacetime with metric $g$.  Let $\big(U_p, \phi_p= (x^1, \dots, x^4)\big)$ be an adapted, starlike chart centered about any point $p$. Let ${\cal F}_p$ be the maximal submanifold containing $p$ in ${\cal F}$, and let $S$ be the connected slice in ${\cal F}_p \cap U_p$ containing $p$. For $M^r$ and $N^r$ as defined above, suppose
\begin{itemize}
\item Non-null condition: $M \neq 0 \;{\rm in}\;U_p\;,$
\item Smoothness condition:
$$\left(\frac{\nabla_r M^r}{M}\right)_{,3}+\left(\frac{\nabla_r N^r}{M}\right)_{,4}=0\;,$$
and
\item Gauge conditions: $$\left(\frac{\nabla_r M^r}{M}\right)_{,a}=0=\left(\frac{\nabla_r N^r}{M}\right)_{,a}=0 \;{\rm for} \;a=1,2\;.$$
\end{itemize}
Then, there exists a  unique (up to an integration constant), smooth, force-free electromagnetic field $F$ on $U_p$ such that $\ker F$ are given by the integral submanifolds of the foliation if and only if smoothness and gauge conditions of the theorem are met.
\label{mainexistuni}
\end{theorem}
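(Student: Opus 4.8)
The plan is to recognize Theorem~\ref{mainexistuni} as an exactness (Poincar\'e lemma) statement in the two variables $x^3,x^4$. Since $M\neq 0$ throughout $U_p$, I would divide eqs.(\ref{FF v3_a}) and (\ref{FF v3_b}) by $M$, set $\Phi\equiv\ln|u|$, and introduce the abbreviations
$$P\equiv -\frac{\nabla_r M^r}{M}\;,\qquad Q\equiv \frac{\nabla_r N^r}{M}\;.$$
The force-free equations then collapse to the first-order system
$$\frac{\partial \Phi}{\partial x^4}=P\;,\qquad \frac{\partial \Phi}{\partial x^3}=Q\;,$$
so the whole problem reduces to deciding when the $1$-form $\omega\equiv Q\,dx^3+P\,dx^4$ admits a potential $\Phi$, and to showing that $\Phi$ — hence $u=\pm e^{\Phi}$ — is then determined up to an additive (respectively multiplicative) constant.

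Next I would spell out the precise role of each hypothesis. Because $u$, and therefore $\Phi$, is a function of $x^3,x^4$ alone by eq.(\ref{finalformF}), the right-hand sides $P$ and $Q$ must share this property; the gauge conditions $P_{,a}=Q_{,a}=0$ for $a=1,2$ are exactly this requirement, and they legitimately reduce the a~priori four-variable overdetermined system to a genuine $1$-form on the $(x^3,x^4)$-slice $S$. The smoothness condition is then nothing but the closedness $d\omega=0$: a short computation gives
$$d\omega=\left(\frac{\partial P}{\partial x^3}-\frac{\partial Q}{\partial x^4}\right)dx^3\wedge dx^4\;,$$
which vanishes precisely when $(\nabla_r M^r/M)_{,3}+(\nabla_r N^r/M)_{,4}=0$, i.e. the equality of mixed partials $\Phi_{,34}=\Phi_{,43}$.

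For the existence direction I would invoke the Poincar\'e lemma on the starlike chart: starlikeness about $p$ makes the projected $(x^3,x^4)$-domain star-shaped about the origin, hence simply connected, so the closed form $\omega$ is exact, and a potential is furnished explicitly by the radial line integral
$$\Phi(x^3,x^4)=\int_0^1\!\Big[x^3\,Q(t x^3,t x^4)+x^4\,P(t x^3,t x^4)\Big]\,dt + C\;.$$
Exponentiating gives $u=\pm e^{\Phi}$, a smooth nowhere-vanishing function fixed up to the single constant factor $\pm e^{C}$ — the ``integration constant'' of the statement. Setting $F=u\,dx^3\wedge dx^4$ produces a field whose kernel is $\mathrm{span}\{\partial_1,\partial_2\}$ by construction, so eq.(\ref{adapchart}) holds and the integral submanifolds of ${\cal F}$ are recovered; it is force-free because, reversing the reduction, $u$ satisfies eqs.(\ref{FF v3_a})--(\ref{FF v3_b}), i.e. $(J^\sharp)^3=(J^\sharp)^4=0$ as demanded by eq.(\ref{Jcond}).

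The converse is immediate and secures the ``only if'': if a smooth force-free $F$ with the prescribed kernel exists, then $u=F_{34}$ obeys the force-free system, so $\Phi=\ln|u|$ is a smooth function of $x^3,x^4$ alone whose first derivatives equal $P$ and $Q$; independence of $x^1,x^2$ forces the gauge conditions, while equality of the mixed second partials of the smooth $\Phi$ forces the smoothness condition. I do not expect a genuine obstacle here, only two points requiring care: first, that the gauge conditions are exactly what decouple the $x^1,x^2$ dependence and license the two-variable Poincar\'e-lemma argument on $S$; and second, the precise sense of uniqueness, namely that passing through $\ln|u|$ introduces a sign and that the additive constant $C$ becomes an overall multiplicative constant on $u$, so ``unique'' must be read as ``unique up to a nonzero constant factor.''
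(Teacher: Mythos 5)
Your proposal is correct and follows essentially the same route as the paper: both reduce the force-free system to the closedness and exactness of the $1$-form $\omega=\frac{\nabla_r N^r}{M}\,dx^3-\frac{\nabla_r M^r}{M}\,dx^4$ on the starlike chart, construct the potential via the radial-integral form of the Poincar\'e lemma, and exponentiate to recover $u$. Your two points of care — that the gauge conditions are precisely what make $\omega$ a genuine two-variable form, and that the ``integration constant'' is really a multiplicative (not additive) constant on $u$ — are in fact slightly sharper than the paper's own phrasing of the uniqueness step, but they do not constitute a different argument.
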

\begin{proof}As mentioned previously,  the adapted chart is such that eq.(\ref{adapchart}) is satisfied. As we have shown above, in $U_p$, the force-free electromagnetic field $F$ can be written as a simple $2$-form given by eq. (\ref{finalformF}).
Since $u$ is only a function of $x^3$ and $x^4$, clearly the gauge condition must be satisfied, and 
prescribing $F|_S$ is simply a matter of picking
$u|_S = u_0$
for some constant $u_0$. By definition, the smoothness condition on $u$ is unavoidable for the class of solutions we are looking for. Therefore, the ``only if'' portion of the theorem is proved.

\vskip0.2in
To prove the existence and uniqueness of a solution, on $U_p$, define a 1-form $\omega$ by
$$\omega = \frac{\nabla_r N^r}{M} \;dx^3 - \frac{\nabla_r M^r}{M}\; dx^4\;.$$
The smoothness condition implies that $d\omega =0$. The Poincare' lemma then implies that on $U_p$, there exists a function $\tilde u$ such that
$$\omega = d \tilde u\;.$$
The Poincare' lemma, in fact, gives a formula for the construction of the potential function as well, and is given by
$$\tilde u = \int_0 ^1 \left[\frac{\nabla_r N^r}{M} (t x^3, t x^4)\;x^3 -\frac{\nabla_r M^r}{M} (t x^3, t x^4)\;x^4\right] dt\;.$$
In the above integral, the integrands are evaluated along $(t x^3, t x^4)$ for $t\in [0,1]$. Since $U_p$ is starlike the integrands are well defined.
Then
$$\partial_3 \tilde u = \int_0 ^1 \left(\frac{\nabla_r N^r}{M}\right)_{,3} (t x^3, t x^4)\;x^3\; t\; dt +\int_0 ^1 \frac{\nabla_r N^r}{M} (t x^3, t x^4) \;dt - \int_0 ^1\left(\frac{\nabla_r M^r}{M} (t x^3, t x^4)\right)_{,3} \;x^4 t\;dt$$

$$= \int_0 ^1 \left(\frac{\nabla_r N^r}{M}\right)_{,3} (t x^3, t x^4)\;x^3\; t\; dt+\int_0 ^1 \frac{\nabla_r N^r}{M} (t x^3, t x^4) \;dt + \int_0 ^1\left(\frac{\nabla_r N^r}{M} (t x^3, t x^4)\right)_{,4} \;x^4 t\;dt\;.$$
The last term in the right hand side above was modified using the smoothness condition. Therefore
$$\partial_3 \tilde u= \int_0 ^1 \frac{d}{dt}\left(t\frac{\nabla_r N^r}{M} \right) \;dt =\frac{\nabla_r N^r}{M}\;.$$
After a similar calculation of $ \partial_4 \tilde u$, we see that $d (\ln u) = d\tilde u$. This can be solved to give
\begin{equation}
u=u_0 \; \exp\tilde u\;,   
\label{unonnull}
\end{equation}
which satisfies all the requirements of the theorem. Now suppose $u_1$ and $u_2$ are two solutions to force free equations on $U_p$ that agree on $S$, then $d(u_1 - u_2) = du_1 - du_2 =0$. I.e., $u_1 = u_2 + c$, where $c$ is a constant which must vanish since the two solutions agree on $S$.
$\blacksquare$
\end{proof}
\end{widetext}
Notice what the theorem enables us to do: In the non-null case, albeit locally, existence of force-free solutions is directly dependent on the existence of foliations where the components of the metric tensor satisfies a prescribed set of properties. The search for force-free solutions can now be a topic of study for geometers as well.
%%%%%%%%%%%%%%%%%%%%%%%%%%%%%%%%%%%%%%%%%%
%\subsection{Recovering Our Previous Solution In The Adapted Frame Formalism}

In a recent paper (\cite{Menon15}), we found an exact solution to the force-free magnetosphere of a Kerr black hole. The solution was not globally well behaved. Nonetheless, it will be instructive to reconstruct the solution using a coordinate system that is adapted to the foliation generated by the kernel of $F$.

In the present formalism, since the foliations take the primary role, let us begin by describing the Kernel of $F$. The solution is magnetically dominated, and this implies that $M \neq 0$, and the above formalism applies. We will denote this solution as $F_{Mag}$. Define vector fields in the Boyer-Lindquist coordinates of the Kerr geometry by 
$$X_1 = (r^2 +a^2) \;\partial_t + a\; \partial_\varphi\;,$$
and
$$X_2 = a \sin^2(\theta) \;\partial_t + L \sin(\theta) \;\partial_\theta + \;\partial_\varphi\;,$$
where $L=L(r)$ is of the form
$$L = f/\sqrt{C^2 - f^2}\;,$$
where $f$ is an arbitrary function of $r$, and $C$ is an integration constant.
Then (\cite{Menon15})
$$\ker F_{Mag} = {\rm span}\{X_1, X_2\}\;.$$
Since our goal is it illustrate the method by which we construct a foliation adapted coordinate system, for computational ease, we set
$L$ (and hence $f$) as a constant.  This is tantamount to the vacuum case since in the $F_{Mag}$ solution the current vector $J$ is given by (as shown in (\cite{Menon15}))
$$J_{Mag}= \frac{f_{,r}}{a \rho^2 \sin^2(\theta)}  \;X_2\;.$$
When $L,f$ is a constant, define vector fields
$$X_3 = P(t,r)\; \partial_t + (r^2+a^2)\; \partial_r$$
where $$P(t,r)=\frac{2 r a}{L} \cos(\theta)+2rt\;,$$
and for consistency and efficiency of notation, set
$$X_4 = \partial_\varphi\;.$$
It is easy to verify that
$[X_i, X_j]=0$ for $i,j=1,\dots,4\;.$
I.e., just as in the previous example, there exists an adapted coordinate system $(x^1, \dots, x^4)$ such that
\begin{equation}
   \frac{\partial}{\partial x^i}=X_i\;.
   \label{cordvect}
\end{equation}
Here the bases $1$-forms are given by
\begin{widetext}
$$\left(
    \begin{array}{c}
      dx^1 \\
      dx^2 \\
      dx^3\\
      dx^4
    \end{array}
  \right)=\frac{1}{r^2+a^2}\left(
           \begin{array}{cccc}
             1 &\frac{-P}{r^2+a^2}&\frac{-a\sin(\theta)}{L}&0 \\
             0&0&\frac{r^2+a^2}{L\sin(\theta)}&0 \\
             0&1&0&0\\
             -a&\frac{aP}{r^2+a^2}&\frac{-\rho^2}{L\sin(\theta)}&r^2+a^2
           \end{array}
         \right) \left(\begin{array}{c}
      dt \\
      dr \\
      d\theta\\
      d\varphi
    \end{array} \right)\;.$$
    \end{widetext}
In this chart,
$$\sqrt{-g} = \rho^2 \sin^2(\theta) (r^2+a^2)^2 L\;.$$ We already know that $F_{Mag}$ must take the form
$$F_{Mag} = u(x^3, x^4)\; dx^3 \wedge dx^4=$$
\begin{equation}
    u \left[\frac{dr}{r^2+a^2}\wedge\left(\frac{-a}{r^2+a^2}dt-\frac{\rho^2}{(r^2+a^2)L\sin(\theta)}d\theta +d\varphi\right)\right]\;.
    \label{FMagform}
\end{equation}
To determine the governing equations for $u$ we need to compute quantities $\{M^r\}$, and $\{N^r\}$ as defined in the previous section.
To impose eq.(\ref{FF v3_a}), we begin by calculating the components of $\{M^r\}$:
$$M^1=-\frac{(L^2+1)}{L^2} \frac{a\Delta}{\rho^2(r^2+a^2)^4}\;,$$
$$M^2=\frac{1}{L^2\rho^2\sin^2(\theta)} \frac{\Delta}{(r^2+a^2)^3}\;,$$
and
$$M = M^4 = -\frac{(L^2+1)}{L^2}\frac{\Delta}{\sin^2(\theta)(r^2+a^2)^4}\;.$$
Then from eq.(\ref{cordvect}) we get that
\begin{widetext}
$$\nabla_r M^r=\frac{1}{\sqrt{-g}}\left[X_1 (M^1\sqrt{-g})+X_2 (M^2\sqrt{-g}) + X_4 (M^4\sqrt{-g})\right]=0\;.$$
\end{widetext}

\noindent From eq.(\ref{FF v3_a}) and definition of $X_4$ we must then have that $u_{,\varphi}=0$. 
To impose eq.(\ref{FF v3_b}), we perform a similar calculation using the components of $\{N^r\}$. Here,
$$N^1=-\frac{(L^2+1)}{L^2} \frac{P\Delta}{\rho^2(r^2+a^2)^4 \sin^2(\theta)}\;,$$
$$N^2=\frac {a P \Delta}{\rho^2(r^2+a^2)^4L^2\rho^2\sin^2(\theta)}\;,$$
and of course $N^3=-M$. Just as above, a careful calculation of $\nabla_r N^r=$ yields that
$$\frac{1}{M} \nabla_r N^r = -\frac{2}{\Delta}\big[(r^2+a^2)(r-M)-2r\Delta\big]\;.$$
Since $u_{,\varphi}=0$, and
$$\frac{\partial u}{\partial x^1} =0= X_1 (u)$$
we have that $u_{,t}=0$. Therefore,
$$\frac{\partial u}{\partial x^3} =X_3 (u) = \;(r^2+a^2) u_{,r}$$
Eq.(\ref{FF v3_b}) then requires that
$$\frac{r^2+a^2}{u}\; \partial_r u = -\frac{2}{\Delta}\big[(r^2+a^2)(r-M)-2r\Delta\big]\;.$$
This is easily integrated to give that
$$u = u_0 \frac{(r^2+a^2)^2}{\Delta}\;,$$
where $u_0$ is an integration constant. Inserting the above expression into eq.(\ref{FMagform}), and as shown in \cite{Menon15}, we get that
$$F_{Mag}= \frac{u_0}{\Delta} \; dr\wedge\left[a dt +\frac{\rho^2}{ L\sin(\theta)}d\theta -(r^2+a^2) d\varphi\right]\;,$$
is a vacuum solution in Kerr geometry. Incidentally, since $F_{Mag}$ is a vacuum solution, its Hodge-Star dual, denoted by $\tilde F_{Mag}$ is also a vacuum solution, and is given by
$$\tilde F_{Mag}= \frac{u_0}{\sin(\theta)} \; d\theta \wedge \left[a \sin^2(\theta) d\varphi- dt\right] +\frac{dt}{L} \wedge d\varphi\;.$$

%%%%%%%%%%%%%%%%%%%%%%%%%%%%%%%%%%%%%%%%%%
\subsection{Foliation by Commuting Killing Vector Fields}
In the previous theorem, one type of stumbling block arises when $\det g$ and the components $\{M^r\}$ and $\{N^r\}$ are functions of $x^1$ and $x^2$. This is because the only variable $u$ that we have is not dependent on the first two coordinates of the adapted chart. But in the event we are spared of coordinates $x^1$ and $x^2$, in the following theorem we will show that the smoothness condition is automatically satisfied. Moreover, in this case, we get an explicit expression of the electromagnetic field.
\begin{theorem} Suppose, in the adapted coordinate chart

\begin{itemize}
\item $M \neq 0\;,$
\item $\;\; \partial_a \det g=0 =\partial_a M \; {\rm for}\; a=1,2\;,$
%\item $\;\; \partial_a M = 0 \; {\rm for}\; a=1,2\;,$
\item $\;\;\partial_1 \;M^1 + \partial_2 \;M^2=0\;,$ and
\item $\;\;\partial_1 \;N^1 + \partial_2 \;N^2=0\;.$
\end{itemize}
Then there exists a  smooth, force-free electromagnetic field $F$ on $U_p$ such that $\ker F$ are given by the integral submanifolds of the foliation. As before, the solution is unique whenever $F|_S$ is prescribed and is given by
\begin{equation}
F = \frac{q}{M \sqrt{-g}} \;dx^3 \wedge dx^4\;,
\label{explicitsol}
\end{equation}
for some constant $q$.
\label{explicitsoltheo}
\end{theorem}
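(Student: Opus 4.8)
The plan is to show that the four hypotheses force the smoothness and gauge conditions of Theorem~\ref{mainexistuni} to hold automatically, so that existence and uniqueness follow at once, and then to integrate eqs.~(\ref{FF v3_a}) and (\ref{FF v3_b}) directly to recover eq.~(\ref{explicitsol}). The crucial preliminary observation is purely algebraic: from the definitions $M^r = g^{r3}g^{34}-g^{33}g^{r4}$ and $N^r = g^{r3}g^{44}-g^{34}g^{r4}$ one reads off that $M^3=0$ and $N^4=0$, together with $M^4=M$ and $N^3=-M$, consistent with eq.~(\ref{Mdef}). These vanishings are what make the whole computation collapse.

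First I would rewrite the two ``divergences.'' Expanding $\sqrt{-g}\,\nabla_r M^r = \partial_r(\sqrt{-g}\,M^r)$ over $r=1,\dots,4$, the $r=3$ term drops since $M^3=0$. The hypothesis $\partial_a\det g = 0$ for $a=1,2$ lets me pull $\sqrt{-g}$ out of the $r=1,2$ terms, so those collapse to $\sqrt{-g}\,(\partial_1 M^1+\partial_2 M^2)$, which vanishes by the third hypothesis. What survives is $\nabla_r M^r = (\sqrt{-g})^{-1}\,\partial_4(\sqrt{-g}\,M)$. An identical manipulation, now using $N^4=0$, $N^3=-M$, and the fourth hypothesis, gives $\nabla_r N^r = -(\sqrt{-g})^{-1}\,\partial_3(\sqrt{-g}\,M)$.

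The key step is to package both expressions through the single scalar $W \equiv \sqrt{-g}\,M$, which by the second hypothesis depends only on $x^3,x^4$. Then $\nabla_r M^r/M = \partial_4\ln|W|$ and $\nabla_r N^r/M = -\partial_3\ln|W|$. Since $W$ is a function of $x^3,x^4$ alone, both quantities are independent of $x^1,x^2$, which is precisely the gauge condition; and the smoothness condition reduces to $\partial_3\partial_4\ln|W|-\partial_4\partial_3\ln|W|=0$, which holds identically by equality of mixed partials for the smooth function $W$. Hence Theorem~\ref{mainexistuni} applies and supplies the required existence and uniqueness once $F|_S$ is prescribed.

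Finally, to extract the explicit form, I would feed these expressions back into eqs.~(\ref{FF v3_a}) and (\ref{FF v3_b}): they reduce to $\partial_4\ln|u| = -\partial_4\ln|W|$ and $\partial_3\ln|u| = -\partial_3\ln|W|$, which integrate immediately to $u = q/W = q/(M\sqrt{-g})$ for a constant $q$, so that $F = u\,dx^3\wedge dx^4$ from eq.~(\ref{finalformF}) is exactly eq.~(\ref{explicitsol}). I expect the only genuine obstacle to be the bookkeeping in the first step, namely correctly identifying that $M^3$ and $N^4$ vanish and that the surviving $x^1,x^2$-derivatives reassemble into the stated hypotheses; once that is done, the reduction to the single potential $W$ makes the smoothness condition a tautology and the integration routine.
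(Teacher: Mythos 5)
Your proposal is correct and follows essentially the same route as the paper: use the hypotheses (together with $M^3=0$, $N^4=0$) to collapse the divergences to $\nabla_r M^r = (\sqrt{-g})^{-1}\partial_4(M\sqrt{-g})$ and $\nabla_r N^r = -(\sqrt{-g})^{-1}\partial_3(M\sqrt{-g})$, check the gauge and smoothness conditions of Theorem~\ref{mainexistuni} (which your potential $W=M\sqrt{-g}$ makes transparent, the smoothness condition becoming equality of mixed partials), and then integrate $d(\ln u)=-d\bigl(\ln(M\sqrt{-g})\bigr)$ to obtain eq.~(\ref{explicitsol}). Your write-up is in fact more explicit than the paper's terse proof, but there is no substantive difference in method.
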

\begin{proof}
The conditions of the theorem imply that
$$\nabla_r M^r= \partial_4 M + M \partial_4 \sqrt{-g}$$
and
$$\nabla_r N^r= -\partial_3 M - M \partial_3 \sqrt{-g}\;.$$
Since we have already required the non-null condition, it is now a trivial matter to check the gauge and smoothness conditions. Finally, note that
$$d(\ln u) = -d(\ln(M\sqrt{-g}))\;.$$
This is easily integrated to give the expression stated in the theorem.
$\blacksquare$
\end{proof}
This theorem clarifies an important point. Note that  in general, one cannot expect a non-full force-free field to smoothly become null. In the limit that $M \rightarrow 0$, $F$ becomes undefined. So, in reality, the force-free condition must break down. If there are indeed  cases where a smooth limit occurs, other restricting conditions have to be met to allow this smooth transition. This is why our general formalism treats the null case separately. As is clear from eq.(\ref{explicitsol}), we are able to formulate a singularity theorem of FFE in the following way.
\begin{corollary}
When the requirements of theorem \ref{explicitsoltheo} holds, in the limiting case, the solution to FFE becomes singular as the electromagnetic field becomes null.
\label{singularity}
\end{corollary}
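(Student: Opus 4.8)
The plan is to read the conclusion directly off the explicit solution furnished by Theorem \ref{explicitsoltheo}, namely $F = \frac{q}{M\sqrt{-g}}\,dx^3\wedge dx^4$, together with the characterization of the null case in Section \ref{nullcondsect}. There the field was shown to be null precisely when $M=0$, so the phrase ``the electromagnetic field becomes null'' must be read as the limit $M\to 0$ taken through the family of non-null configurations to which Theorem \ref{explicitsoltheo} applies. The single component function is $u = q/(M\sqrt{-g})$, and the entire content of the corollary is that this scalar cannot stay bounded in that limit.

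First I would fix $q\neq 0$: the choice $q=0$ yields only the trivial field $F\equiv 0$, whose ``null limit'' carries no information. Next I would note that $\sqrt{-g}$ is the determinant factor of a genuine Lorentzian metric on $U_p$ and is therefore finite and bounded away from zero in a neighborhood of $p$. With these two observations the coefficient $u=q/(M\sqrt{-g})$ diverges as $M\to 0$, so the two-form $F$ becomes unbounded. This already establishes the singular behavior asserted by the corollary and confirms the preceding remark that $F$ is undefined at $M=0$.

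To pin down the sense in which the limit is ``null,'' I would additionally compute the invariant $F^2=F_{\mu\nu}F^{\mu\nu}$ for the simple two-form $F=u\,dx^3\wedge dx^4$. Using $M=(g^{34})^2-g^{33}g^{44}$, a short calculation gives $F^2 = -2Mu^2$, and substituting the on-shell value of $u$ yields $F^2 = -\frac{2q^2}{M(-g)}$. Hence, although $F^2=0$ would formally flag a null field, the on-shell scalar $F^2$ in fact diverges as $M\to 0$ rather than vanishing smoothly. This is the sharp form of the statement: the non-null branch does not limit to a finite null field but blows up, so no smooth transition across $F^2=0$ is possible without additional structure.

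The main obstacle is conceptual rather than computational. Since $M=0$ is exactly the locus excluded by the non-null hypothesis of Theorem \ref{explicitsoltheo}, the corollary is really a statement about the limiting behavior of a family of non-null solutions as the geometry is deformed toward a null foliation, not about a solution defined at $M=0$. The one point that genuinely requires care is ruling out the degenerate escape route in which $\sqrt{-g}\to\infty$ conspires to keep $u$ finite; the smoothness of $g$ and the regularity of the adapted chart on the starlike neighborhood $U_p$ forbid this, which is what makes the divergence of $u$ unavoidable.
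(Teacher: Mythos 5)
Your proposal is correct and takes essentially the same route as the paper, which justifies the corollary simply by reading eq.~(\ref{explicitsol}) in the light of Section~\ref{nullcondsect}: the null limit is $M\to 0$, and since $\sqrt{-g}$ is locally bounded away from zero in the adapted chart, $u=q/(M\sqrt{-g})$ must diverge. Your extra computation $F^2=-2Mu^2=-2q^2/\big(M(-g)\big)$, showing that the invariant blows up rather than tending to zero, is a correct sharpening beyond what the paper states (the paper identifies the null limit through the degeneration of $g$ restricted to $\ker F$, not through $F^2$), but the core argument is the same.
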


The conditions for the theorem above are not easily met in general, and there is no direct way to recognize them from the start save in the case when spacetime admits 2 commuting Killing vector fields. But, when this happens, there is a easy expression for at least 1 force-free solution when $M \neq 0$. The following corollary is an immediate consequence of theorem \ref{explicitsoltheo} above.
\begin{corollary}
Let $({\cal M}, g)$ admit two commuting Killing vector fields $X_1$ and $X_2$. Then in the adapted coordinate system, where $\partial_a = X_a$ for $a = 1,2$, if $M\neq 0$, there exists a non-null force-free electromagnetic solution given by eq.(\ref{explicitsol}).
\end{corollary}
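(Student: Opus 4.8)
The plan is to deduce the corollary directly from Theorem~\ref{explicitsoltheo} by checking that each of its hypotheses is forced upon us by the assumption that $X_1$ and $X_2$ are commuting Killing vector fields. The non-null condition $M \neq 0$ is granted outright, so the entire task reduces to verifying the three remaining bullet points of that theorem, after which the explicit solution eq.(\ref{explicitsol}) follows with no further work.

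First I would extract the two independent consequences of the hypotheses. The commutation relation $[X_1, X_2] = 0$ makes the distribution $\mathrm{span}\{X_1, X_2\}$ involutive, so Frobenius' theorem guarantees that it is integrable and that an adapted chart with $\partial_a = X_a$ for $a = 1,2$ genuinely exists; in this chart $F$ takes the simple form eq.(\ref{finalformF}) with $\ker F = \mathrm{span}\{\partial_1,\partial_2\}$ as demanded by eq.(\ref{adapchart}). The Killing property then supplies the metric information: since $\mathcal{L}_{X_a} g = 0$ and $X_a = \partial_a$, this reads in coordinates as $\partial_a g_{\mu\nu} = 0$ for $a = 1,2$, so the metric components depend only on $x^3$ and $x^4$.

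Next I would propagate this independence through every object appearing in Theorem~\ref{explicitsoltheo}. Because the inverse-metric components $g^{\mu\nu}$, the determinant $\det g$, and hence $\sqrt{-g}$ are all algebraic functions of the $g_{\mu\nu}$, they too are annihilated by $\partial_1$ and $\partial_2$. In particular $\partial_a \det g = 0$ and $\partial_a M = 0$ for $a = 1,2$, which is the second bullet. Likewise $M^1, M^2, N^1, N^2$ are algebraic combinations of the $g^{\mu\nu}$, so each is independent of $x^1$ and $x^2$; therefore $\partial_1 M^1 = 0$ and $\partial_2 M^2 = 0$ termwise, giving $\partial_1 M^1 + \partial_2 M^2 = 0$, and identically $\partial_1 N^1 + \partial_2 N^2 = 0$. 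This settles the third and fourth bullets.

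With all hypotheses confirmed, Theorem~\ref{explicitsoltheo} delivers the smooth, force-free field $F = \frac{q}{M\sqrt{-g}}\,dx^3 \wedge dx^4$ of eq.(\ref{explicitsol}), with $\ker F$ equal to the integral submanifolds of the foliation generated by $X_1$ and $X_2$. I do not expect a genuine obstacle: the proof is an application of the earlier theorem, and the only step requiring care is the translation of the coordinate-free statement $\mathcal{L}_{X_a} g = 0$ into $\partial_a g_{\mu\nu} = 0$, which is legitimate precisely because the adapted chart is built so that $X_a = \partial_a$. Everything downstream is the elementary remark that functions independent of $x^1, x^2$ remain so under the algebraic operations defining $\det g$, $M$, $\{M^r\}$, and $\{N^r\}$.
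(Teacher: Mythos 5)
Your proposal is correct and follows exactly the route the paper intends: the paper states this corollary as an ``immediate consequence'' of Theorem~\ref{explicitsoltheo}, and the content you supply---that $X_a = \partial_a$ being Killing forces $\partial_a g_{\mu\nu} = 0$, hence $\det g$, $M$, $\{M^r\}$, $\{N^r\}$ are all independent of $x^1, x^2$, so every hypothesis of that theorem holds termwise---is precisely the omitted verification. Nothing in your argument deviates from or adds risk beyond what the paper's one-line justification presumes.
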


%%%%%%%%%%%%%%%%%%%%%%%%%%%%%%%%%%%%%%%%%%
\subsection{Vacuum Solutions}

In the adapted coordinate system, we already have that $(J^\sharp)^a =0$ for $a=3,4$. In the event
$$g^{\mu3}g^{\nu4} =0$$
whenever $\mu, \nu \neq 3,4$, in eq.(\ref{jexpress}), $\alpha, \beta = 1, 2$. In this case, we have that
$J=0$, and our formalism reduces to the case of vacuum solutions. A simple example in Kerr spacetime will illustrate the, albeit limited, power of eq.(\ref{explicitsol}).

\vskip0.2in 
Here $\partial_t$ and $\partial_\varphi$ are Killing vector fields, and so we set $\{x^1 = t, x^2=\varphi, x^3 = r, x^4=\theta\}$ for our adapted coordinate system. I.e., we are simply using the Boyer-Lindquist coordinate system.
In this case
$$M = -\frac{\Delta}{\rho^4}\;,$$
and so from eq.(\ref{explicitsol})
$$F_{KV}= q\frac{ \rho^2}{\Delta \sin(\theta)}\; dr \wedge d\theta$$
is easily verified to be a vacuum solution. Here, $q$ is the integration constant, and $KV$ stands for Kerr vacuum. The above expression for $F_{KV}$ is exactly the same in appearance in the horizon penetrating Kerr-Schild coordinate system as well.
As expected, $F_{KV}$ is undefined at the event horizon given by $\Delta = 0$ even though the metric is not singular in the Kerr-Schild coordinate system. In our case the solution is necessarily singular due to our singularity theorem (corollary \ref{singularity}). The metric, when restricted to the kernel of $F_{KV}$, becomes degenerate and hence the solution  approaches the null limit. The null Killing vector
$$(r_+ ^2 + a^2)\; \partial_t + a\; \partial_\varphi$$ has a vanishing inner product with every tangent vector of the kernel. 

Notice that $F_{KV}$ is one of the terms in $F_{Mag}$. But, when treated as separate solutions, their individual kernels are entirely different distributions.
As mentioned before, since $F_{KV}$ is a vacuum solution, its Hodge-Star dual given by
$$\tilde F_{KV}= q\;dt \wedge d\varphi$$
is also a vacuum solution in Kerr geometry.
%%%%%%%%%%%%%%%%%%%%%%%%%%%%%%%%%%%%%%%%%%%%%%%%%%%%%%%%%
\section{An example in FRW Cosmoloy}
%%%%%%%%%%%%%%%%%%%%%%%%%%%%%%%%%%%%%%%%%%%%%%%%%%%%%%%%%
To further illustrate the computational merits of the formalism described above, we include a new solution to the Maxwell field in a Friedman-Robertson-Walker spacetime (as far as the author is aware). For concreteness, we will pick the choice where the sectional curvature $k$ is set to $1$.
In the hyper-spherical coordinate system the metric, in this case, takes the form
$$g=-dt^2+ a^2(t)\; dr^2 +a^2(t) \sin^2 r\; d\Omega^2\;. $$
Here $ d\Omega^2$ is the metric of a unit $2$-sphere. We do not place any restrictions on the matter content of this universe and consequently $a(t)$ is not fixed. To fix the foliation, let us pick an adapted frame given by
$$X_1=\partial_\theta, \;X_2 = \partial_\varphi, \; X_3 = \partial_t, \;{\rm and}\;\;X_4 = \partial_r\;.$$
We shall denote the resulting solution as $F_{FRW}$.
By this choice, we are setting
$$\ker F_{FRW} = {\rm span}\{\partial_\theta, \partial_\varphi\}\;.$$
Clearly, $\ker F_{FRW}$ is not degenerate. Further since it is spacelike, we will see that $F_{FRW}$ is electrically dominated. Also, since the metric is diagonal in the adapted coordinate, as previously mentioned, if a solution exists, it is guaranteed to be a vacuum solution. Following eq.(\ref{finalformF}), we shall tentatively write 
$$F_{FRW}= u(t,r) \;dt \wedge dr\;.$$
A direct and trivial calculation shows that the only non-trivial components of $M^r$ and $N^r$
are given by
$$M^4=\frac{1}{a^2}=-N^3\;.$$
Force-free equations (\ref{FF v3_a}) and (\ref{FF v3_b}) reduce to the tractable form given by
$$\partial_r \ln|u| = -\partial_r \ln \sin^2 r$$
and
$$\partial_t \ln|u| = -\partial_t \ln |a|\;.$$
The above two equations are easily integrated to give
$$F_{FRW}= \frac{u_0}{a(t) \sin^2 r} \;dt \wedge dr\;.$$
It is easy to check that this solution satisfies Maxwell's vacuum equations. Additionally, the dual magnetically dominated solution in this case is given by
$$\tilde F_{FRW}= *F_{FRW}=- u_0 \;a(t) \sin^2 \theta\;d\theta \wedge d\varphi\;.$$
In both the cases, $u_0$ is the usual integration constant.
%%%%%%%%%%%%%%%%%%%%%%%%%%%%%%%%%%%%%%%%%%%%%%%%%%%%%%%%%
\section{Discussion And Conclusion}
%%%%%%%%%%%%%%%%%%%%%%%%%%%%%%%%%%%%%%%%%%%%%%%%%%%%%%%%%
We have made great progress in understanding the connection between FFE and foliation. The initial data surfaces take the form of 2-dimensional submanifolds whose tangent space agrees with the kernel of the degenerate electromagnetic field. One is not free to pick the submanifolds. Indeed, the entire difficulty in the theory of FFE is in finding the appropriate foliation of spacetime with suitable submanifolds. Once we find the foliation, a solution is guaranteed. This work suggests that the initial value problem of FFE may not be a meaningful concept. Instead, we must focus our attention on FFE admissible spacetimes and its foliations.

When $F^2 \neq 0$, the solution is unique modulo an integration constant (which is the only choice in local initial data). In the null case, we get a class of solutions depending on 2 different parameters (coordinates $x^3$ and $x^4$) of the theory.

Further, we are able to explain why the general theory of FFE without any further restrictions separates into the null case and the non-null case. This is because smooth transitions are not generally allowed. In a certain class of solutions, one that is generated by a pair of commuting Killing vector fields, solutions necessarily become undefined as one approaches the null limit.

It is important to mention that our formalism includes the Blandford-Znajek mechanism. In addition to the worked out examples, any new solution to the Blandford-Znajek mechanism must satisfy eqs. (\ref{FF v3_a}) and (\ref{FF v3_b}). This is not surprising since the equations governing $u$ are derived from the fully covariant force-free equations of electrodynamics in curved spacetime. Unlike the Blanford-Znajek mechanics however, this paper does not place any further restrictions like stationarity or axis-symmetry. Nor do we fix the background metric. Also, since the exteriors of black holes are covered by a single coordinate chart, new numerical recipes can be constructed to search for foliations with the required properties. Once the foliations are found, one can retroactively integrate to find closed-form solutions given by eq.(\ref{finalformF}) in the null case, or eq. (\ref{unonnull}) otherwise.

Clearly, this work expects a follow up in several directions. The connection between FFE and foliations have been tackled by breaking covariance. I.e., we have heavily relied on the adapted coordinate system. The recasting of the theory in completely geometric terms should now be tractable.

Further, our formalism only develops a local theory. The maximal submanifolds of the foliation could extend much further. So, it will be helpful to see how the local solutions can be smoothly stitched together. It maybe that further topological restrictions are necessary to allow such an extension.  

Finally, the issue of a smooth transitions between solutions describing null and non-null fields have to be studied in greater detail. It is not clear whether the singularity theorem we have proved can be extended to a larger class of solutions.
\section{Appendix}
%%%%%%%%%%%%%%%%%%%%%%%%%%%%%%%%%%%%%%%%%%%%%%%%%%%%%
\subsection{Exterior Calculus}
\label{ExtCalc}
In this section, we simply list the relevant formulae of exterior calculus we have used through out the paper. Let a differential form $\omega$ be given by
$$\omega=\omega_{i_1\;\dots\;i_k} \;dx^{i_1} \otimes \;\dots\; \otimes dx^{i_k}\;,$$
where the component functions $\omega_{i_1\;\dots\;i_k}$ are completely antisymmetric, then
$$\omega=\frac{1}{k!}\;\omega_{i_1\;\dots\;i_k} \;dx^{i_1} \wedge \;\dots\; \wedge dx^{i_k}\;.$$
The {\bf exterior derivative} of $\omega$ is defined by the expression
$$d\omega= \frac{1}{k!}\;\omega_{i_1\;\dots\;i_k,r}\;\;dx^r \wedge dx^{i_1} \wedge \;\dots\; \wedge dx^{i_k}\;.$$
In an $n$ dimensional manifold, the Hodge-Star operator $*$ takes a $k$ form to a $n-k$ form. It is defined by the formula
$$*\omega= \frac{1}{k! (n-k)!}\;(*\omega)_{i_1\;\dots\;i_{n-k}} \;dx^{i_1} \wedge \;\dots\; \wedge dx^{i_{n-k}}\;.$$
where
$$(*\omega)_{i_1\;\dots\;i_{n-k}}=\epsilon_{j_1 \dots j_k {i_1} \dots i_{n-k}}\omega^{j_1\;\dots\;j_k}\;.$$
The components of $\omega$ are raised as usual by the inverse metric tensor. I.e.,
$$\omega^{j_1\;\dots\;j_k}=g^{j_1 i_1} \dots g^{j_k i_k}\; \omega_{i_1\;\dots\;i_k}\;.$$
\vskip0.2in\noindent
The Poincare' lemma tells us that all closed forms are locally exact:
Let $U$ be an open, starlike set about a point $p$ in the manifold. Let $\omega$ be a $k$ form on $U$ such that $d\omega =0$. Then there is a $k-1$ form $\alpha$ on $U$ such that
$$\omega = d \alpha\;.$$
\vskip0.2in\noindent
Contractions of the Levi-Civita tensor can be taken using the formula

$\epsilon^{a_1 a_2 \dots a_j a_{j+1} \dots a_n} \;\epsilon_{a_1 a_2 \dots a_j b_{j+1} \dots b_n}=
$
\begin{equation}
    (-1)^s \;(n-j)! \; j!\;\delta^{[a_{j+1}} _{b_{j+1}}\;\dots \delta^{a_n]} _{b_n}\;.
    \label{levicontract}
\end{equation}
Here $s$ is the index of the metric. In the case of general relativity, for our choice of signature, $s=1$. The square brackets indicate that the indices in between have to be summed in an anti-symmetric fashion. 
\vskip0.2in\noindent
Cartan's magic formula for differential forms is given by
$${\cal L}_v F = d \;i_v F + i_v \;d F\;.$$
Here ${\cal L}_v F$ is the Lie derivative of $F$ with respect to $v$.

%%%%%%%%%%%%%%%%%%%%%%%%%%%%%%%%%%%%%%%%%%%%%%%%%%%%%%%%
\subsection{Useful Results From Differential Geometry}
\label{DGAppen}
Details of proofs of all the results in this section can be found in \cite{JLee13}.

Let ${\cal M}$ be a smooth manifold of dimension $m$. Let $d < m$. ${\cal D}$ is a $d$-dimensional distribution of ${\cal M}$ if for every $ p \in {\cal M}$, ${\cal D} (p)$ is a $d$-dimensional subspace of $T_p({\cal M})$. ${\cal D}$ is a smooth distribution provided ${\cal D}$ is locally spanned by smooth vector fields. ${\cal D}$ is involutive (or completely integrable) if for any smooth vector fields $X$ and $Y$ that lie in  ${\cal D}$,  $[X,Y]$ lies in ${\cal D}$. A submanifold ${\cal N}$ of ${\cal M}$ is an integral manifold of a distribution ${\cal D}$ if the tangent space of every $p\in {\cal N}$ is given by ${\cal D}(p)$.
If ${\cal D}$ is a smooth distribution in ${\cal M}$ such that there is an integral manifold passing through every point of  ${\cal M}$, then  ${\cal D}$ is clearly involutive. 
\vskip0.2in\noindent
The following powerful theorem by {\bf Frobenius} proves the converse of the previous statement:
Let ${\cal D}$ be a $d$-dimensional  involutive distribution in ${\cal M}$. Let $p \in {\cal M}$. There there exists an embedded integral manifold of ${\cal D}$ through $p$. Additionally, there exists a cubic coordinate system $(U, \phi)$ centered at $p$ and
$$\phi = (x^1, \dots, x^m)$$ such that
$x^i =$ constant for $i = d+1, \dots, m$ are integral manifolds of ${\cal D}$.

\vskip0.2in\noindent
This final result will help us identify useful, and in our case adapted, coordinate functions. 
In an $m$-dimensional manifold, let $X_1, \dots X_k$, $k\leq m$ be  smooth, point wise linearly independent, commuting  fields on an open set  about some $p \in {\cal M}$. Then there exists a coordinate a chart $\{x^i\}_{i=1} ^m$ about $p$ such that for $i=1, \dots,k$, locally
$$X_i= \frac{\partial}{\partial x^i}\;.$$

Finally, let ${\cal F}$ be a collection of submanifolds $\{{\cal F}\}_\alpha$ of fixed dimension. Let  ${\cal F}_\alpha \cap {\cal F}_\beta = \emptyset$ for $\alpha \neq \beta$ such that $\cup_\alpha {\cal F}_\alpha$ is the entire manifold ${\cal M}$, then we say that ${\cal F}$ is a foliation of ${\cal M}$, and ${\cal F}_\alpha$ are the leaves of the foliation.

%%%%%%%%%%%%%%%%%%%%%%%%%%%%%%%%%%%%%%%%%%%%%%%%%%%%%%%%%%%%
\subsection{Kerr Geometry in Boyer-Lindquist/Outgoing Kerr-Schild coordinates}
From the analysis in the main body of the paper it is clear that we need the contravariant form of the Kerr metric. In the Boyer-Lindquist coordinate system, $(t, r, \theta,\varphi)$, this takes the following form:

\begin{equation}
g^{ \mu  \nu} = \left[\begin{array}{cccc}
-\frac{\Sigma^2}{\rho^2 \Delta}&0& 0& -\frac{az}{\Delta}\\
0 &\frac{\Delta}{\rho^2}&0&0\\
0 & 0& \frac{1}{\rho^2} & 0\\
-\frac{az}{\Delta}&  0&0& \frac{\Delta-a^2 \sin^2(\theta)}{\rho^2\Delta\sin^2(\theta)}\\
\end{array}\right] \;.
\label{invgkerr}
\end{equation}

Here,
$$\rho^2 = r^2 + a^2
\cos^2\theta\;,\;\;\;\Delta = r^2 -2 M r + a^2\;,$$
$$
\Sigma^2 = (r^2 + a^2)^2 -\Delta \; a^2 \sin^2\theta\;,
$$
and
$$\sqrt{-g}=\rho^2 \sin\theta\;.$$
Here $r=r_+ = M +\sqrt{M^2-a^2}$ locates the outer event horizon.

The infalling Kerr-Schild coordinates are $(\bar t, r, \theta,\bar \varphi)$. They are related to the Boyer-Lindquist coordinates by the following relations:
$$d\bar t = dt+\frac{r^2+a^2}{\Delta}dr\;, \;\;\; {\rm and} \;\;\;\; d\bar \varphi = d\varphi+\frac{a}{\Delta}dr.$$
In the Kerr-Schild outgoing coordinates, the
metric components in the basis $\{\bar t,  r, \theta,\bar
\varphi\}$ become
\begin{equation}
 \bar g_{ \mu  \nu} = \left[\begin{array}{cccc}
z-1& 1& 0& -za\sin^2\theta\\
1& 0& 0& -a\sin^2\theta\\
0 & 0& \rho^2 & 0\\
-za\sin^2\theta & -a\sin^2\theta& 0& \Sigma^2 \sin^2\theta/\rho^2\\
\end{array}\right] \;.
\label{kerrbar}
\end{equation}
%%%%%%%%%%%%%%%%%%%%%%%%%%%%%%%%%%%%%%%%%%%%%%%%

\bibliography{bibliography}% Produces the bibliography via BibTeX.

\end{document}